\def\mywidth{.9}
\def\mywidthRep{.8}
\newif\ifcommentson\commentsonfalse
\def\mywidth{.9}
\def\mywidthRep{.8}
\newcommand{\commentCP}[1]{\begin{center} \parbox{\mywidth\textwidth}{\textbf{\textcolor{black}{Comment C.}} \textcolor{red}{#1 }}\end{center}}
\newcommand{\commentKC}[1]{\begin{center} \parbox{\mywidth\textwidth}{\textbf{\textcolor{black}{Comment K.}} \textcolor{red}{#1} }\end{center}}
\newcommand{\commentYK}[1]{\begin{center} \parbox{\mywidth\textwidth}{\textbf{\textcolor{black}{Comment Y.}} \textcolor{red}{#1} }\end{center}}
\newcommand{\commentMA}[1]{\begin{center} \parbox{\mywidth\textwidth}{\textbf{\textcolor{black}{Comment M.}} \textcolor{red}{#1} }\end{center}}
\newcommand{\replyCP}[1]{\begin{center} \parbox{\mywidthRep\textwidth}{\textbf{Reply C.} \textcolor{blue}{#1} }\end{center}}
\newcommand{\replyKC}[1]{\begin{center} \parbox{\mywidthRep\textwidth}{\textbf{Reply K.} \textcolor{blue}{#1} }\end{center}}
\newcommand{\replyYK}[1]{\begin{center} \parbox{\mywidthRep\textwidth}{\textbf{Reply Y.} \textcolor{blue}{#1} }\end{center}}
\newcommand{\replyMA}[1]{\begin{center} \parbox{\mywidthRep\textwidth}{\textbf{Reply M.} \textcolor{blue}{#1} }\end{center}}
\newcommand{\commentC}[1]{\marginpar{\footnotesize \color{red} {\bf C:} \textsf{\scriptsize #1}}}
\newcommand{\commentK}[1]{\marginpar{\footnotesize \color{red} {\bf K:} \textsf{\scriptsize #1}}}
\newcommand{\commentY}[1]{\marginpar{\footnotesize \color{red} {\bf Y:} \textsf{\scriptsize #1}}}
\newcommand{\commentM}[1]{\marginpar{\footnotesize \color{red} {\bf M:} \textsf{\scriptsize #1}}}
\newcommand{\replyC}[1]{\marginpar{\footnotesize \color{red} {\bf C:} \textsf{\scriptsize #1}}}
\newcommand{\replyK}[1]{\marginpar{\footnotesize \color{red} {\bf K:} \textsf{\scriptsize #1}}}
\newcommand{\replyY}[1]{\marginpar{\footnotesize \color{red} {\bf Y:} \textsf{\scriptsize #1}}}
\newcommand{\replyM}[1]{\marginpar{\footnotesize \color{red} {\bf M:} \textsf{\scriptsize #1}}}
\newcommand{\commentCP}[1]{}
\newcommand{\commentKC}[1]{}
\newcommand{\commentYK}[1]{}
\newcommand{\commentMA}[1]{}
\newcommand{\replyCP}[1]{}
\newcommand{\replyKC}[1]{}
\newcommand{\replyYK}[1]{}
\newcommand{\replyMA}[1]{}
\newcommand{\commentC}[1]{}
\newcommand{\commentK}[1]{}
\newcommand{\commentY}[1]{}
\newcommand{\commentM}[1]{}
\newcommand{\replyC}[1]{}
\newcommand{\replyK}[1]{}
\newcommand{\replyY}[1]{}
\newcommand{\replyM}[1]{}
\newcommand{\calx}{\mathcal{X}}
\newcommand{\caly}{\mathcal{Y}}
\newcommand{\calc}{\mathcal{C}}
\newcommand{\cala}{\mathcal{A}}
\newcommand{\cald}{\mathcal{D}}
\newcommand{\cali}{\mathcal{I}}
\newcommand{\reals}{\mathbb{R}}
\newcommand{\distr}{\mathbb{D}}
\newcommand{\payd}{\mathit{u_{\sf d}}}
\newcommand{\paya}{\mathit{u_{\sf a}}}
\newcommand{\Payd}{\mathit{U_{\sf d}}}
\newcommand{\Paya}{\mathit{U_{\sf a}}}
\newcommand{\Pay}{\mathit{U}}
\newcommand{\eqdef}{\ensuremath{\stackrel{\mathrm{def}}{=}}}
\newcommand{\argmin}{\operatornamewithlimits{argmin}}
\newcommand{\supp}[1]{{\sf supp}(#1)}
\newcommand{\expectDouble}[2]{\operatornamewithlimits{\displaystyle\mathbb{E}}_{\substack{#1\\ #2}}}
\newcommand{\vf}{\mathbb{V}} 
\newcommand{\priorvf}[1]{\vf\!\left[#1\right]} 
\newcommand{\postvf}[2]{\vf\!\left[#1,#2\right]} 
\newcommand{\samplefrom}{\leftarrow} 
\newcommand{\bigadd}{\operatorname{\sum}}
\newcommand{\dist}[1]{\mathbb{D}{#1}}
\newcommand{\qm}[1]{``#1''}
\newcommand{\true}{T}
\newcommand{\false}{F}
\newcommand{\smallsum}[1]{\textstyle{\sum_{#1}\:}}
\begin{document}

\mainmatter  

\title{Information Leakage Games}

\titlerunning{Information Leakage Games}

%
%
\author{M\'{a}rio S. Alvim\inst{1}
\and Konstantinos Chatzikokolakis\inst{2}
\and Yusuke Kawamoto\inst{3} 
\and Catuscia Palamidessi\inst{4}}
\authorrunning{Alvim et al.}

\institute{Universidade Federal de Minas Gerais, Brazil
\and CNRS and \'{E}cole Polytechnique, France
\and AIST, Japan
\and INRIA and \'{E}cole Polytechnique, France}

%
%

\toctitle{Lecture Notes in Computer Science}
\tocauthor{Authors' Instructions}
\maketitle

\begin{abstract}{ 
We consider a game-theoretic setting to model the interplay between 
attacker and defender in the context of  information flow, 
and to reason about their optimal strategies.  
In contrast with standard game theory, in our games the utility of 
a mixed strategy is a convex function of the 
distribution on the defender's pure actions,  rather than the expected value of  their utilities.  
Nevertheless, 
the important properties of game theory, 
notably the existence of a Nash equilibrium, still hold for our (zero-sum) leakage games, 
and we provide algorithms to compute the corresponding optimal strategies. 
As typical in (simultaneous) game theory, the optimal strategy is usually mixed, i.e., 
probabilistic, for both the attacker and the defender. 
From the point of view of information flow, this  was to be expected in the case of the defender, 
since  it is well known  that randomization at the level of the system design may help to reduce information leaks. 
Regarding the attacker, however, this seems the first work (w.r.t. the literature in information flow)
proving formally  that in certain cases the optimal attack strategy is necessarily probabilistic. }
\end{abstract}

\section{Introduction}
\label{sec:introduction}
A fundamental problem in computer security is the leakage of sensitive 
information due to correlation of \emph{secret information} with 
\emph{observable information} 
publicly available, 
or in some way accessible, to the attacker. 
Correlation in fact allows for the use of Bayesian inference 
to guessing the value of the secret. 
Typical  examples are \emph{side channels attacks}, 
in which (observable) physical aspects of the system, such as the execution 
time of a decryption algorithm, may be exploited  by the attacker to
restrict the range of the possible (secret) encryption keys. 
The branch of security that studies the amount of information leaked by a system is called 
\emph{Quantitative Information Flow} (QIF), and it has seen growing 
interest over the past decade.
See for instance \cite{Clark:07:JCS,Kopf:07:CCS,Smith:09:FOSSACS,Alvim:12:CSF,Boreale:15:LMCS}, just to mention a few.

In general, it has been recognized that  randomization 
can be very useful  to obfuscate the link between secrets and observables. 
Examples include various anonymity protocols (for instance, the dining cryptographers \cite{Chaum:88:JC} and Crowds \cite{Reiter:98:TISS}), and  the renown framework of differential privacy \cite{Dwork:06:TCC}.
The \emph{defender} (the system designer, or the user) is, therefore, typically probabilistic. 
As for the attacker,
most works in the literature consider only \emph{passive attacks}, 
limited to observing the system's behavior. 
Notable exceptions are the works of Boreale and Pampaloni \cite{Boreale:15:LMCS},
and of Mardziel et al. \cite{Mardziel:14:SP}, which consider \emph{adaptive attackers} 
who interact with and influence the system.
We note that, however, \cite{Boreale:15:LMCS} does not consider
probabilistic  strategies for the attacker.
As for \cite{Mardziel:14:SP}, although their model allows them, 
none of their extensive  case-studies needs probabilistic attack strategies to maximize 
leakage.
This may seem surprising, since, as mentioned before, randomization is known to be 
useful (and, in general, crucial) for the defender to undermine the attack and protect the secret.
Thus there seems to be an asymmetry between attacker and defender w.r.t. 
probabilistic strategies in QIF. 
Our thesis is that there is indeed an asymmetry, but this does not mean that the 
attacker has nothing to gain from randomization: 
when the defender can change his own strategy according to the attacker's 
actions, it becomes advantageous for the attacker to try to be \emph{unpredictable}
and, consequently, adopt a probabilistic strategy. 
For the defender, while randomization is useful for the same reason, it is also 
useful because \emph{it reduces the information leakage},
and since information leakage constitutes the gain of the attacker, this reduction
influences his strategy.
This latter aspect introduces the asymmetry mentioned above. 

In the present work, we consider scenarios in which both attacker 
and defender can make choices that influence the system during the attack. 
We aim, in particular, at analyzing the attacker's strategies that can 
maximize information leakage, and the defender's most appropriate strategies to 
counterattack and keep the system as secure as possible. 
As argued before, randomization can help both attacker and defender 
make their moves unpredictable.  
The most suitable framework for analyzing this kind of interplay is, 
naturally, game theory, where the use of randomization can be modeled by 
the notion of \emph{mixed strategies}, and where the interplay 
between attacker and defender, and their struggle to achieve the best result for themselves, can be modeled  
in terms of \emph{optimal strategies} and \emph{Nash equilibrium}.
It is important to note, however, that one of the two advantages that randomization has for the defender, namely the reduction of information leakage, 
has no counterpart in standard game theory. 
Indeed, we demonstrate that this property makes the utility of a 
mixed strategy be a convex function of the distribution of the defender. 
In contrast, in standard game theory the utility of a mixed strategy is the expectation 
of the utility of the pure strategies of each player, and therefore it is an affine 
function on each of the players'  distributions.  
As a consequence, we need to consider a new kind of games, which we call 
\emph{information leakage games}, where the utility of a mixed strategy is a 
function affine on the attacker's strategy, and convex on the defender's. 
Nevertheless, 
the fundamental results of game theory, notably 
the minimax theorem and the existence of Nash equilibria, still hold for our 
zero-sum leakage
games. 
We also propose algorithms to compute the optimal strategy, namely, the 
strategies  for the attacker and the defender that lead to a Nash equilibrium, 
where no player has anything to gain by unilaterally changing his own strategy. 

For reasoning about information leakage, we employ the well-established
information-theoretic framework, which is by far the most used in QIF. 
A central notion in this model is that of 
\emph{vulnerability}, 
which 
intuitively measures how easily the secret can be discovered 
(and exploited) by the attacker. 
For the sake of generality, we adopt 
the notion of vulnerability as any convex and continuous 
function \cite{Boreale:15:LMCS,Alvim:16:CSF}, 
which has been shown 
to subsume most previous measures of the QIF literature \cite{Alvim:16:CSF}, 
including \emph{Bayes vulnerability} (a.k.a. min-vulnerability~\cite{Smith:09:FOSSACS,Chatzikokolakis:08:JCS}), 
\emph{Shannon entropy}~\cite{Shannon:48:Bell}, 
\emph{guessing entropy}~\cite{Massey:94:IT}, and 
\emph{$g$-vulnerability}~\cite{Alvim:12:CSF}.

We note that vulnerability is an expectation measure over the secrets. 
In this paper we assume the utility   to be 
such average measure, but,  in some cases, 
it could be advantageous for the defender to adopt different strategies depending on the value of the secret. 
We leave this  refinement for future work. 

The main contributions of this paper are the following: 
\begin{itemize}
\item We define a general framework of \emph{information leakage games}
to reason about the interplay between attacker and defender in 
QIF scenarios.

\item We prove that, in our framework, the utility is  a convex function of the mixed 
strategy of the defender. 
%
To the best of our knowledge, this 
 is a novelty w.r.t. traditional game theory, 
where the utility of a mixed strategy is defined as expectation of the utilities 
of the pure strategies. 

\item We provide methods for finding the solution and the equilibria of 
 leakage games by solving a convex optimization problem. 

\item We show examples in which Nash equilibria require a mixed strategy. 
This is, to the best of our knowledge, the first proof in 
QIF
that in some cases the optimal strategy of the attacker 
must be probabilistic.


\item As a case study, we consider the 
Crowds protocol in  a MANET (Mobile Ad-hoc NETwork).
We study the case in which the attacker can add a 
corrupted node as an attack, the defender can add an honest 
node as a countermeasure, and we compute the defender  component of the   Nash equilibrium.
\end{itemize}

\paragraph*{Plan of the paper}
In Section~\ref{sec:preliminaries} we review the basic notions 
of game theory and QIF.
In Section~\ref{sec:motivating-example} we introduce some motivating  examples. 
In section~\ref{sec:comparison-standart-gt}
we discuss the difference of our leakage games from those of standard game theory. 
In Section~\ref{sec:convex-games} we prove the convexity of the utility of the defender.
In Section~\ref{sec:nash-general}
we present algorithms for computing the Nash equilibria and 
optimal strategies for leakage games.
In Section~\ref{sec:case-study} we apply our framework to a version of the Crowds protocol.
In Section~\ref{sec:related-work} we discuss related work.	 
Section~\ref{sec:conclusion}  concludes.

\section{Preliminaries}
\label{sec:preliminaries}
In this section we review some basic notions from game theory
and QIF.

We use the following notation.
Given a set $\cali$, we denote 
by $\distr\cali$ the \emph{set of all probability distributions}
over $\cali$.
Given $\mu\in \distr\cali$, its \emph{support}
$\supp{\mu}$ is the set of its elements with positive probabilities, i.e.,
$\supp{\mu} = \{ i \in \cali : \mu(i)>0 \}$.
We write $i \samplefrom \mu$ to indicate that a value 
$i \in \cali$ is sampled from a distribution $\mu$ on $\cali$.


\subsection{Two-player, simultaneous games}
\label{subsec:two-player-games}

We review basic definitions from \emph{two-player games},
a model for reasoning about the behavior of strategic players.
We refer to \cite{Osborne:94:BOOK} for more details. 

In a game, each player has at its disposal a set of \emph{actions} 
that he can perform, and obtains some payoff (gain or loss)
depending on the outcome of the actions chosen by both players.
The payoff's value to each player is evaluated using a \emph{utility function}.
Each player is assumed to be \emph{rational}, i.e.,  
his choice is driven by the attempt to maximize his own utility.
We also assume that 
the set of possible actions and the utility functions
of both players are \emph{common knowledge}.

In this paper we  only consider \emph{finite games}, namely the cases in which the 
set of actions available to each player is finite.   
%
Furthermore, we only consider simultaneous games, meaning that each player chooses 
actions without knowing the actions chosen by the other.
Formally, such a game is defined as a tuple%
\footnote{Following the convention of \emph{security games}, we set the first player 
to be the defender.} 
$(\cald, \cala, \payd, \paya)$, where
$\cald$ is a nonempty set of \emph{defender's actions}, 
$\cala$ is a nonempty set of \emph{attacker's actions},
$\payd: \cald\times\cala \rightarrow \reals$ is the 
\emph{defender's utility function}, and
$\paya: \cald\times\cala \rightarrow \reals$ is the 
\emph{attacker's utility function}.

Each player may choose an action deterministically or probabilistically.
A \emph{pure strategy} of the defender (resp. attacker) is a deterministic 
choice of an action, i.e., an element $d\in\cald$ (resp. $a\in\cala$). 
A pair $(d, a)$ is  a \emph{pure strategy profile}, and $\payd(d, a)$, $\paya(d, a)$ 
represent  the defender's and the attacker's utilities. 

A \emph{mixed strategy} of the defender (resp. attacker) is a probabilistic choice of an action, 
defined as a probability distribution 
$\delta\in\distr\cald$ (resp. $\alpha\in\distr\cala$).
A pair  $(\delta, \alpha)$ is called 
a \emph{mixed strategy profile}.
%
The defender's and the attacker's \emph{expected utility functions} for  mixed strategies are defined, 
respectively, as:
\begin{align*}
\Payd(\delta,\alpha) &
\eqdef {\displaystyle\expectDouble{d\leftarrow\delta}{a\leftarrow\alpha}\hspace{-0.5ex}  \payd(d, a)}
=\hspace{-0.5ex} \sum_{\substack{d\in\cald\\ a\in\cala}} \delta(d) \alpha(a) \payd(d, a) \\
\Paya(\delta,\alpha) & 
\eqdef {\displaystyle\expectDouble{d\leftarrow\delta}{a\leftarrow\alpha}\hspace{-0.5ex}  \paya(d, a)}
=\hspace{-0.5ex} \sum_{\substack{d\in\cald\\ a\in\cala}} \delta(d) \alpha(a) \paya(d, a)
\end{align*}

A defender's mixed strategy $\delta\in\distr\cald$ is a 
\emph{best response} to an attacker's mixed strategy 
$\alpha\in\distr\cala$ if 
$\Payd(\delta, \alpha) = \max_{\delta'\in\distr\cald}\Payd(\delta', \alpha)$.
Symmetrically, $\alpha\in\distr\cala$ is a \emph{best response} to $\delta\in\distr\cald$ 
if $\Paya(\delta, \alpha) = \max_{\alpha'\in\distr\cala}\Payd(\delta, \alpha')$.
A \emph{mixed-strategy Nash equilibrium} is a profile $(\delta^*, \alpha^*)$ such that $\delta^*$ is a best response to $\alpha^*$ and vice versa. 
Namely, no unilateral deviation by any single player provides better 
utility to that player.
If $\delta^*$ and $\alpha^*$ are point distributions  concentrated on some $d^*\in\cald$ and $a^*\in\cala$, respectively, then 
$(\delta^*, \alpha^*)$ is a \emph{pure-strategy Nash equilibrium}, and will be denoted by $(d^*, a^*)$. 
%
While not all games have a pure strategy Nash equilibrium, every finite game has a mixed strategy Nash equilibrium.

\subsection{Zero-sum games and Minimax Theorem}
\label{subsec:zero-sum-games}
A game $(\cald,\, \cala,\, \payd, \paya)$ is \emph{zero-sum} if for any $d\in\cald$ and any $a\in\cala$,
$\payd(d, a) = -\paya(d, a)$, i.e., the defender's loss is equivalent to the attacker's gain.
For brevity, in zero-sum games we denote by $u$ the attacker's utility 
function $\paya$, and by $U$ the attacker's expected utility $\Paya$.%
\footnote{Conventionally in game theory  the utility $u$ is  set to 
be that of the first player, but we prefer to look at the utility 
from the point of view of the attacker to be in line with the definition of 
utility as \emph{vulnerability}, as we will introduce in Section~\ref{subsec:qif}.}
Consequently,  the goal of the defender is to minimize $U$, and the goal of the attacker is 
to maximize it. 

In simultaneous zero-sum games the Nash equilibrium corresponds to the solution of the \emph{minimax} problem (or equivalently,  the \emph{maximin} problem), namely, the profile $(\delta^*, \alpha^*)$ such that 
$U(\delta^*, \alpha^*)=\min_{\delta} \max_{\alpha} U(\delta, \alpha)$. 
The von Neumann's minimax theorem ensures that such solution (which always exists) is stable: 

\begin{theorem}[von Neumann's minimax theorem]
\label{theo:vonneumann}
Let $\calx \subset \reals^m$ and $\caly \subset \reals^n$ be compact convex sets,
and $\Pay: \calx\times\caly\rightarrow\reals$ be a continuous function such that
$\Pay(x, y)$ is convex in $x\in\calx$ and concave in $y\in\caly$.
Then it is the case that
$\min_{x\in\calx} \max_{y\in\caly} \Pay(x, y) = \max_{y\in\caly} \min_{x\in\calx} \Pay(x, y)$.
\end{theorem}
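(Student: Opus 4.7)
The plan is to establish the equality by producing a \emph{saddle point} $(x^*, y^*) \in \calx \times \caly$ satisfying $\Pay(x^*, y) \leq \Pay(x^*, y^*) \leq \Pay(x, y^*)$ for every $x \in \calx$ and $y \in \caly$; once such a pair is exhibited, both iterated optima collapse to the common value $\Pay(x^*, y^*)$ and the theorem follows. First I would dispatch the easy direction: for any $x_0, y_0$ one has $\min_{x} \Pay(x, y_0) \leq \Pay(x_0, y_0) \leq \max_{y} \Pay(x_0, y)$, and taking $\max_{y_0}$ on the left and $\min_{x_0}$ on the right immediately yields $\max_{y \in \caly} \min_{x \in \calx} \Pay(x, y) \leq \min_{x \in \calx} \max_{y \in \caly} \Pay(x, y)$.

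For the non-trivial reverse inequality I would appeal to \emph{Kakutani's fixed point theorem}. Define the best-response correspondences $B_\calx(y) := \argmin_{x \in \calx} \Pay(x, y)$ and $B_\caly(x) := \argmax_{y \in \caly} \Pay(x, y)$, and combine them into a single correspondence $\Phi \colon \calx \times \caly \rightrightarrows \calx \times \caly$ with $\Phi(x, y) := B_\calx(y) \times B_\caly(x)$. I would then verify Kakutani's hypotheses in turn: (i) $\calx \times \caly$ is nonempty, compact, and convex by assumption; (ii) each $\Phi(x, y)$ is nonempty by the Weierstrass extreme value theorem, since $\Pay$ is continuous and $\calx, \caly$ are compact; (iii) each $\Phi(x, y)$ is convex, which is precisely where the assumed convexity of $\Pay(\cdot, y)$ and concavity of $\Pay(x, \cdot)$ are used to ensure that $B_\calx(y)$ and $B_\caly(x)$ are convex subsets; and (iv) $\Phi$ has closed graph, which I would obtain from Berge's maximum theorem applied to the continuous $\Pay$ on the compact domains. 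Kakutani then delivers a fixed point $(x^*, y^*) \in \Phi(x^*, y^*)$, i.e., $x^* \in \argmin_{x} \Pay(x, y^*)$ and $y^* \in \argmax_{y} \Pay(x^*, y)$, which is exactly the saddle-point condition.

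Chaining the saddle-point inequalities yields $\min_{x} \max_{y} \Pay(x, y) \leq \max_{y} \Pay(x^*, y) = \Pay(x^*, y^*) = \min_{x} \Pay(x, y^*) \leq \max_{y} \min_{x} \Pay(x, y)$, which together with the easy direction closes the gap. The main obstacle I anticipate is the closed-graph verification for $\Phi$: the nonemptiness and convexity of $B_\calx(y)$ and $B_\caly(x)$ are textbook consequences of the convex/concave structure, but upper hemi-continuity of an argmin or argmax is delicate and I would invoke Berge's maximum theorem rather than attempt an \emph{ad hoc} argument. A fixed-point-free alternative would be a separating-hyperplane argument in the spirit of Sion, but in the finite-dimensional setting of the statement Kakutani provides the cleanest path to a saddle point.
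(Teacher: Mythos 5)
The paper does not actually prove this statement: von Neumann's minimax theorem is quoted in the Preliminaries as a classical result, stated without proof and used later (via the saddle-point property) to derive the existence of Nash equilibria for leakage games in Corollary~\ref{cor:Nash}. So there is no in-paper argument to compare against; what can be assessed is whether your proof is correct on its own terms, and it is. The Kakutani route is one of the standard proofs of the convex--concave minimax theorem, and you have all the pieces in the right places: the easy inequality $\max_y \min_x \Pay \leq \min_x \max_y \Pay$; nonemptiness of the best-response sets from Weierstrass; convexity of $B_\calx(y)$ and $B_\caly(x)$ from convexity of $\Pay(\cdot,y)$ and concavity of $\Pay(x,\cdot)$ (the sublevel/superlevel sets at the optimal value are convex); the closed graph of $\Phi$, which in this constant-feasible-set situation follows either from Berge or from a direct sequential argument using continuity of $\Pay$ and compactness of $\calx\times\caly$; and the chaining of the saddle-point inequalities. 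A pleasant side effect of your argument is that it produces the saddle point $(x^*,y^*)$ explicitly, which is exactly the ``related property'' the paper states right after the theorem and the form in which the result is actually used. The only cosmetic caveat is that $\calx$ and $\caly$ must be nonempty for the statement to be meaningful, which you note but the theorem statement leaves implicit.
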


A related property is that, under the conditions of Theorem~\ref{theo:vonneumann}, there exists a \emph{saddle point} $(x^*, y^*)$ s.t., for all $x\in\calx$  and $y\in\caly$, 
$ \Pay(x^{*}, y){\leq}\Pay(x^*, y^*){\leq}\Pay(x, y^*)$.

%

\subsection{Quantitative information flow}
\label{subsec:qif}

Finally, we briefly review the standard framework of
quantitative information flow, 
which is used to measure the amount of information 
leakage in a system.
  
\paragraph{Secrets and vulnerability}
A \emph{secret} is some piece of sensitive information the 
defender wants to protect, such as a user's password, social 
security number, or current location. 
The attacker usually only has some partial knowledge 
about the value of a secret, represented as a probability
distribution on secrets called a \emph{prior}.
We denote by $\calx$ the set of possible secrets,
and we typically use $\pi$ to denote a prior belonging to 
the set $\dist{\calx}$ of probability distributions over 
$\calx$. 

The  \emph{vulnerability} of a secret is a measure of the utility 
of the attacker's knowledge about the secret. 
In this paper we consider a very general notion of vulnerability, 
following~\cite{Alvim:16:CSF}, and define a vulnerability $\vf$ 
to be any continuous and convex function of type 
$\dist{\calx} \rightarrow \reals$.
It has been shown in~\cite{Alvim:16:CSF} that these functions coincide
with the set of $g$-vulnerabilities, and are, in a precise sense, 
the most general information measures w.r.t. a set of basic 
axioms.~\footnote{
\label{footnote:convex}
More precisely, if posterior vulnerability 
is defined as the expectation of the vulnerability of posterior
distributions, 
the measure respects the data-processing inequality 
and yields non-negative leakage iff
vulnerability is convex.}

\paragraph{Channels, posterior vulnerability, and leakage}
Systems can be modeled as information
theoretic channels.
A 
\emph{channel}
$C : \calx \times \caly \rightarrow \reals$ is a function
in which $\calx$ is a set of \emph{input values}, $\caly$ is a set 
of \emph{output values}, and $C(x,y)$ represents the conditional 
probability of the channel producing output $y \in \caly$ when 
input $x \in \calx$ is provided. 
Every channel $C$ satisfies $0 \leq C(x,y) \leq 1$ for all 
$x\in\calx$ and $y\in\caly$, and $\sum_{y\in\caly} C(x,y) = 1$ for all $x\in\calx$.

A distribution $\pi\in\dist{\calx}$ and a channel $C$ 
with inputs $\calx$ and outputs $\caly$ induce a joint distribution
$p(x,y) = \pi(x)C({x,y})$ on $\calx \times \caly$,
with marginal 
probabilities $p(x) = \sum_{y} p(x,y)$ and 
$p(y) = \sum_{x} p(x,y)$, and conditional probabilities 
$p(x{\mid}y) = \nicefrac{p(x,y)}{p(y)}$ if $p(y) \neq 0$. 
For a given $y$ (s.t. $p(y) \neq 0$), the conditional 
probabilities $p(x{\mid}y)$ for each $x \in \calx$ form the 
\emph{posterior distribution $p_{X \mid y}$}.

A channel $C$ in which $\calx$ is a set of secret values 
and $\caly$ is a set of observable values produced
by a system can be used to model computations on secrets.
Assuming the attacker has prior knowledge $\pi$ about
the secret value, knows how a channel $C$ works, and
can observe the channel's outputs, the effect of the channel 
is to update the attacker's knowledge from a prior $\pi$ to a 
collection of posteriors $p_{X \mid y}$, each occurring 
with probability $p(y)$.%

Given a vulnerability $\vf$, a prior $\pi$, and a channel $C$, the \emph{posterior vulnerability} 
$\postvf{\pi}{C}$ is the vulnerability 
of the secret after the attacker has observed the output of 
$C$.
Formally:
$\postvf{\pi}{C} \eqdef \sum_{y \in \caly} p(y) \priorvf{p_{X \mid y}}$.

%

The \emph{information leakage} of 
a channel $C$ under a prior $\pi$ is a comparison between 
the vulnerability of the secret before the system
was run---called the \emph{prior} vulnerability---and the 
posterior vulnerability of the secret.
The leakage reflects by how much the observation of the 
system's outputs increases the utility of the attacker's 
knowledge about the secret. 
It can be defined either 
\emph{additively} ($\postvf{\pi}{C}-\priorvf{\pi}$), or
\emph{multiplicatively} ($\nicefrac{\postvf{\pi}{C}}{\priorvf{\pi}}$).

\section{A motivating example}
\label{sec:motivating-example}
We present some
 simple 
examples
to motivate our information leakage games. 

\subsection{The two-millionaires problem}
The ``two-millionaires problem'' was introduced by Yao in \cite{Yao:82:FOCS}. 
In the original formulation, there are two ``millionaires'', Alice and Don,
who want to discover who is the richest among them, but neither wants to 
reveal to the other the amount of money that he or she has. 

We consider a (conceptually) asymmetric variant of this problem, where Alice 
is the attacker and Don is the defender.
Don wants to learn whether or not he is richer than Alice, 
but does not want Alice to learn anything about the amount $x$ of money he has.
To this purpose, Don sends $x$ to a trusted server Jeeves, 
who in turn asks Alice, privately, what is her amount $a$ of money.
Jeeves then checks which among $x$ and $a$ is greater, and sends the 
result $y$ back to Don.%
\footnote{The reason to involve Jeeves is that Alice may not want to reveal $a$ to Don, either.}
However, Don is worried that Alice may intercept Jeeves' message containing the 
result of the comparison, and exploit it to learn more accurate information about $x$
by tuning her answer $a$ appropriately (since, given $y$, Alice can deduce whether
$a$ is an upper or lower bound on $x$).
We assume that Alice may get to know Jeeves' reply, 
but not the messages from Don to Jeeves. 

We will use the following information-flow terminology: 
the information that should remain secret (to the attacker) is called \emph{high}, 
and what is visible to (and possibly controllable by) the attacker is called \emph{low}. 
Hence, in the program run by Jeeves $a$ is a \emph{low input} 
and $x$ is a \emph{high input}. 
The result $y$ of the comparison (since it may be intercepted by the attacker) is a \emph{low output}. 
The problem is to avoid the \emph{flow  of information} from $x$ to $y$ (given $a$).

One way to mitigate this problem is to use randomization. 
Assume that Jeeves provides two different programs to ensure the service.
Then, when Don sends his request to Jeeves, he can make a random choice $d$ 
among the two programs 0 and 1, sending $d$ to Jeeves along with the value $x$.
Now if Alice intercepts the result $y$, it will be less useful to her since
she does not know which of the two programs has been run.
As Don of course knows which program was run, the result $y$ will still be 
just as useful to him.~\footnote{Note that $d$ should not be revealed to the attacker:
although $d$ is not sensitive information in itself, knowing it 
would help the attacker figure out the value of $x$.}

In order to determine the best probabilistic strategy that Don should apply to select the program, we  analyze the problem from a game-theoretic perspective. 
For simplicity, we assume that $x$ and $a$ both range in $\{0,1\}$. 
The two alternative programs that Jeeves can run are shown in Table~\ref{tab:jeeves}.
 
\begin{table}
\qquad
\qquad
\quad
\begin{small}
\begin{minipage}[t]{0.40\columnwidth}
\noindent 
\texttt{\underline{Program 0}}\\[2mm]
\noindent \texttt{\textbf{High Input:}} $x \in \{0,1\}$\\
\noindent \texttt{\textbf{Low Input:}} $a \in \{0,1\}$\\
\texttt{\textbf{Output:}} $y\in \{\true,\false\}$\\[1mm]
return $x\leq a$\\[1mm]
\end{minipage}
\quad
\begin{minipage}[t]{0.40\columnwidth}
\noindent 
\texttt{\underline{Program 1}}\\[2mm]
\noindent \texttt{\textbf{High Input:}} $x \in \{0,1\}$\\
\noindent \texttt{\textbf{Low Input:}} $a \in \{0,1\}$\\
\texttt{\textbf{Output:}} $y\in \{\true,\false\}$\\[1mm]
return $x\geq a$\\[1mm]
\end{minipage}
\end{small}
\caption{The two programs run by Jeeves.}\label{tab:jeeves}
\end{table}
\vspace{-8mm}
The combined choices of Alice and Don determine how the system behaves.
Let $\cald = \{0,1\}$ represent Don's possible choices, i.e., the program to run, 
and $\cala = \{0,1\}$ represent Alice's possible choices, i.e., the value of the 
low input $a$. 
We shall refer to the elements of $\cald$ and $\cala$ as \emph{actions}.
For each possible combination of actions 
$d$ and $a$, we can construct a channel 
$C_{da}$
with inputs $\calx = \{0,1\}$ (the set of possible high input values)
and outputs $\caly = \{\true, \false\}$ (the set of possible 
low output values),
modeling the behavior of the system \emph{from the point of view of the attacker}.
Intuitively, each channel entry $C_{da}(x,y)$ is the probability that 
the program run by Jeeves (which is determined by $d$) produces output 
$y\in\caly$ given that the high input is $x \in \calx$ and that the 
low input is $a$. 
The resulting four channel matrices are represented in Table~\ref{tab:two-millionaries-channels}.
Note that channels $C_{01}$ and $C_{10}$ do not leak any 
information about the input $x$ (output $y$ is constant), 
whereas channels $C_{00}$ and $C_{11}$ completely reveal $x$ (output $y$ 
is in a bijection with $x$).

\begin{table}[t]
\begin{center}
\begin{tabular}{c c c c c}
&
&
$a=0$
&
&
$a=1$
\\[-1ex]
\qquad
\\[-1ex]
$d = 0 \;\; \;(x \leq a?)$
&
\;\;
&
$\begin{array}{|c|c|c|}\hline
C_{00} & y=\true & y=\false \\ \hline
x=0    & 1 & 0 \\
x=1    & 0 & 1 \\ \hline
\end{array}
$
&
\;\;\;\;
&
$\begin{array}{|c|c|c|}
\hline
C_{01} & y=\true & y=\false \\ \hline
x=0    & 1 & 0  \\
x=1    & 1 & 0   \\ \hline
\end{array}
$
\\
\qquad
\\
$d = 1 \;\; \; (x \geq a?)$
&
\;\;
&
$
\begin{array}{|c|c|c|}
\hline
C_{10} & y=\true & y=\false \\ \hline
x=0    & 1 &  0\\
x=1    & 1 & 0 \\ \hline
\end{array}
$
&
\quad
&
$\begin{array}{|c|c|c|}
\hline
C_{11} & y=\true & y=\false \\ \hline
x=0    & 0 & 1 \\
x=1    & 1 & 0 \\ \hline
\end{array}
$
\end{tabular}
\end{center}
\caption{The two-millionaires system, from the point of view of the attacker.}
\vspace{-8mm}
\label{tab:two-millionaries-channels}
\end{table}

We want to investigate how the defender's and the attacker's strategies
influence the leakage of the system. 
For that we can consider the (simpler) notion of posterior vulnerability, 
since, for a given prior,
the value of leakage is in a one-to-one 
(monotonic) correspondence with the value of posterior vulnerability.
For this example, we consider posterior
Bayes vulnerability~\cite{Chatzikokolakis:08:JCS,Smith:09:FOSSACS}, defined as
$\postvf{\pi}{C} =\sum_y\max_xC(x,y)\pi(x)$.
Intuitively, Bayes vulnerability measures the probability of the 
adversary guessing the secret correctly in one try, and it
can be shown that $\postvf{\pi}{C}$
coincides with the converse 
of the Bayes error.

For simplicity, we assume a uniform prior distribution $\pi_u$.
It has been shown that, in this case, the posterior Bayes vulnerability of
a channel $C$ can be computed as the sum of the greatest elements 
of each column of $C$, divided by the high input-domain size~\cite{Braun:09:MFPS}. 
Namely,
$\postvf{\pi_u}{C} = \nicefrac{\sum_y\max_xC(x,y)}{|\calx|}$.
It is easy to see that we have $\postvf{\pi_{u}}{C_{00}}= \postvf{\pi_{u}}{C_{11}} = 1$ and $\postvf{\pi_{u}}{C_{01}}= \postvf{\pi_{u}}{C_{10}} = \nicefrac{1}{2}$. 
Thus  we obtain the utility table shown in Table~\ref{table:vulnerabilitygame}, 
which is similar to that of the well-known \qm{matching-pennies} game.

\begin{wraptable}{r}{0.35\linewidth}
\centering
\vspace{-11mm}
\renewcommand{\arraystretch}{1.25}
\[
\begin{array}{|c|c|c|}
\hline
\vf & a=0 & a=1 \\ \hline
d=0    & 1 & \nicefrac{1}{2} \\ 
\hline
d=1    & \nicefrac{1}{2} & 1\\ \hline
\end{array}
\]
\renewcommand{\arraystretch}{1}
\vspace{-4mm}
\caption{{\rm Utility table for the two-millionaires game}.}
\label{table:vulnerabilitygame}
\vspace{-6mm}
\end{wraptable}

As in standard game theory, there may not exist an optimal pure strategy profile.
The defender as well
as the attacker can then try to minimize/maximize the system's vulnerability by
adopting a mixed strategy $\delta$ and $\alpha$, respectively. A
crucial task is  \emph{evaluating the vulnerability} of the system under such
mixed strategies.
This evaluation is naturally performed from the point of view of the attacker,
who  knows his own choice $a$, but \emph{not the defender's choice
$d$}.
As a consequence, the attacker sees the system as the convex
combination $C_{\delta a} = \sum_d \delta(d)\, C_{ad}$, i.e., a probabilistic choice between
the channels representing the defender's actions. Hence, the overall
vulnerability of the system will be given by the vulnerability of $C_{\delta
a}$, averaged over all attacker's actions. 

We now define formally the ideas illustrated above.

\begin{definition}\label{eq:v-mixed}
An \emph{information-leakage game} is a tuple $(\cald, \cala, C)$ where $\cald, \cala$ are the sets of actions of the attacker and the defender, respectively, and  $C=\{C_{da}\}_{da}$ is a family of channel matrices indexed
on pairs of actions  $d\in\cald,a\in\cala$. For a given vulnerability $\vf$ and prior $\pi$, the utility of a pure strategy $(d,a)$
is given by $\postvf{\pi}{C_{d a}}$. The
utility $\vf(\delta,\alpha)$ of a mixed strategy $(\delta,\alpha)$ is  defined as:
\[
	\vf(\delta,\alpha)\eqdef \expectDouble{a\leftarrow\alpha}{}\hspace{-0.5ex} \postvf{\pi}{C_{\delta a}} =\smallsum{a} \alpha(a) \,\postvf{\pi}{C_{\delta a}}
	\quad\text{where}\quad
	C_{\delta a} \eqdef \sum_d \delta(d) \,C_{ad}
\]
\end{definition}

 

In our example, $\delta$ is represented by a single number $p$: the
probability that the defender chooses $d = 0$ (i.e., Program $0$).
From
the point of view of the attacker, once he has chosen $a$, the system will look
like a channel $C_{pa} = p\, C_{0a} + (1-p)\,C_{1a} $. For instance, in the case $a=0$, if
$x$ is $0$ Jeeves will send $\true$ with probability $1$, but, if $x$ is $1$,
Jeeves will send $\false$ with probability $p$ and $\true$  with probability
$1-p$. Similarly for $a=1$.
Table~\ref{tab:mixed two-millionaries-channels} summarizes the various channels
modelling the attacker's point of view.
It is easy to see that $\postvf{\pi_{u}}{C_{p0}} = \nicefrac{(1+p)}{2}$ and $\postvf{\pi_{u}}{C_{p1}} = \nicefrac{(2-p)}{2}$. In this case $\postvf{\pi_{u}}{C_{pa}}$ coincides 
with the expected utility with respect to $p$, i.e., $\postvf{\pi_{u}}{C_{pa}} = p\,\postvf{\pi_{u}}{C_{0a}}+(1- p)\,\postvf{\pi_{u}}{C_{1a}}$. 
 
\begin{table}[t]
\centering
$
\begin{array}{|c|c|c|}
\multicolumn{3}{c}{\text{Utility table for $a=0$}} \\[1mm]
\hline
C_{p0} & y=\true & y=\false \\ \hline
x=0    & 1 & 0 \\
x=1    & 1-p & p \\ \hline
\end{array}
$
\qquad \qquad
$
\begin{array}{|c|c|c|}
\multicolumn{3}{c}{\text{Utility table for $a=1$}} \\[1mm]
\hline
C_{p1} & y=\true & y=\false \\ \hline
x=0    & p & 1-p  \\
x=1    & 1 & 0   \\ \hline
\end{array}
$
\vspace{2mm}
\caption{The two-millionaires mixed strategy of the defender, from the point of view of the attacker, where $p$ is the probability the defender picks action $d=0$.}
\label{tab:mixed two-millionaries-channels}
\vspace{-6mm}
\end{table}

Assume now that the attacker choses $a=0$ with probability $q$ and $a=1$ with probability $1-q$. The utility is obtained as expectation with respect to the strategy of the attacker, hence the total utility  is:
$\vf (p,q) = \nicefrac{q  \,(1+p)}{2} + \nicefrac{(1-p)\,(2-p)}{2}$, which is affine in both $p$ and $q$. By applying standard game-theoretic techniques, 
we  derive that the optimal strategy is $(p^*,q^*) = (\nicefrac{1}{2},\nicefrac{1}{2})$. 

In the above example,  things work just like in standard game theory. However, in the next section we will  show an example that fully exposes the difference of our games with respect to those of  standard game theory.  

\subsection{Binary sum}
The previous example is an instance of a general scenario in which a user, Don, delegates to a server, Jeeves, a certain computation that requires also some input from other users. 
Here we will consider another instance, in which the function to be computed is 
the binary sum $\oplus$.
We assume Jeeves provides the programs in Table~\ref{tab:oplus}.
The resulting channel matrices are represented in Table~\ref{tab:oplus-channels}.
\begin{table}[h]
\qquad
\qquad
\quad
\begin{small}
\begin{minipage}[t]{0.40\columnwidth}
\noindent 
\texttt{\underline{Program 0}}\\[1mm]
\noindent \texttt{\textbf{High Input:}} $x \in \{0,1\}$\\
\noindent \texttt{\textbf{Low Input:}} $a \in \{0,1\}$\\
\texttt{\textbf{Output:}} $y\in \{0,1\}$\\ 
return $x\oplus a$\\[1mm]
\end{minipage}
\quad
\begin{minipage}[t]{0.40\columnwidth}
\noindent 
\texttt{\underline{Program 1}}\\[1mm]
\noindent \texttt{\textbf{High Input:}} $x \in \{0,1\}$\\
\noindent \texttt{\textbf{Low Input:}} $a \in \{0,1\}$\\
\texttt{\textbf{Output:}} $y\in \{0,1\}$\\
return $x\oplus a \oplus1$\\[1mm]
\end{minipage}
\end{small}
\caption{The two programs for $\oplus$ and its complement.}
\label{tab:oplus}
\vspace{-8mm}
\end{table}

\begin{table}[t]
\begin{center}
\begin{tabular}{r c c c c}
&
&
$a=0$
&
&
$a=1$
\\[-1ex]
\qquad
\\[-1ex]
$d = 0 \;\; \;(x \oplus a)$
&
\;\;
&
$\begin{array}{|c|c|c|}\hline
C_{00} & y=0 & y=1 \\ \hline
x=0    & 1 & 0 \\
x=1    & 0 & 1 \\ \hline
\end{array}
$
&
\;\;\;\;
&
$\begin{array}{|c|c|c|}
\hline
C_{01} & y=0 & y=1 \\ \hline
x=0    & 0 & 1  \\
x=1    & 1 & 0   \\ \hline
\end{array}
$
\\
\qquad
\\
$d = 1 \;\; \; (x \oplus a \oplus 1)$
&
\;\;
&
$
\begin{array}{|c|c|c|}
\hline
C_{10} & y=0 & y=1 \\ \hline
x=0    & 0 & 1\\
x=1    & 1 & 0 \\ \hline
\end{array}
$
&
\quad
&
$\begin{array}{|c|c|c|}
\hline
C_{11} & y=0 & y=1 \\ \hline
x=0    & 1 & 0 \\
x=1    & 0 & 1 \\ \hline
\end{array}
$
\end{tabular}
\end{center}
\caption{The binary-sum system, from the point of view of the attacker.}
\label{tab:oplus-channels}
\vspace{-4mm}
\end{table}

We consider again Bayes posterior vulnerability as utility. It is easy to see that we have $\postvf{\pi_{u}}{C_{00}}= \postvf{\pi_{u}}{C_{11}} = \postvf{\pi_{u}}{C_{01}}= \postvf{\pi_{u}}{C_{10}} = 1$. 
Thus  for the pure strategies we obtain the  utility table shown in Table~\ref{table:vulnerabilitygameoplus}. 
This means that all pure strategies have the same utility $1$ and therefore they are all equivalent.
In standard  game theory 
this would mean that also 
the mixed  strategies have
 the same utility $1$, since they are defined as expectation. 
In our case, however, the utility of a mixed strategy of the defender is convex on the
\begin{wraptable}{r}{0.35\linewidth}
\vspace{-4mm}
\centering
\renewcommand{\arraystretch}{1.25}
\[
\begin{array}{|c|c|c|}
\hline
\vf & a=0 & a=1 \\ \hline
d=0    & 1 & 1 \\ 
\hline
d=1    & 1 & 1\\ \hline
\end{array}
\]
\renewcommand{\arraystretch}{1}
\vspace{-6mm}
\caption{{\rm Utility table for the binary-sum game}.}
\label{table:vulnerabilitygameoplus}
\vspace{-6mm}
\end{wraptable}
distribution, so it may be convenient
 for the defender to adopt a mixed strategy. 
Let $p, 1-p$ be the probabilities of the defender choosing Program $0$ and Program $1$, respectively. From the point of view of the attacher, for each of his choices of $a$,   the system will appear as the probabilistic channel $C_{pa}$ represented in Table~\ref{tab:mixed binary sum}.
\begin{table}[h]
\begin{center}
\vspace{-4mm}
\begin{tabular}{ c c c}
$a=0$
&
&
$a=1$
\\[-1ex]
\qquad
\\[-1ex]
$\begin{array}{|c|c|c|}\hline
C_{p0} & y=\true & y=\false \\ \hline
x=0    & p & 1-p \\
x=1    & 1-p & p \\ \hline
\end{array}
$
&
\;\;\;\;
&
$\begin{array}{|c|c|c|}
\hline
C_{p1} & y=\true & y=\false \\ \hline
x=0    & 1-p & p  \\
x=1    & p & 1-p   \\ \hline
\end{array}
$
\end{tabular}
\end{center}
\caption{The binary-sum mixed strategy of the defender, from the point of view of the attacker, where $p$ is the probability the defender picks action $d=0$.}
\label{tab:mixed binary sum}
\vspace{-6mm}
\end{table}

It is easy to see that 
$\postvf{\pi_{u}}{C_{p0}} = \postvf{\pi_{u}}{C_{p1}} = 1-p$ if $p \leq \nicefrac{1}{2}$,
and $\postvf{\pi_{u}}{C_{p0}} = \postvf{\pi_{u}}{C_{p1}} = p$ if $p \geq \nicefrac{1}{2}$.
On the other hand, with respect to a mixed strategy  of the attacker the utility is still defined as expectation. Since in this case the utility is the same for $a=0$ and $a=1$,   it  remains the same for any strategy of the attacker. Formally, $\vf(p,q) = q\, \postvf{\pi_{u}}{C_{p0}} + (1-q) \, \postvf{\pi_{u}}{C_{p1}} = \postvf{\pi_{u}}{C_{p0}}$, which does not depend on $q$ and it is minimum for  $p=\nicefrac{1}{2}$. We conclude that the point of equilibrium is $(p^*,q^*)=(\nicefrac{1}{2},q^*)$ for any value of $ q^*$.

\section{Leakage games vs. standard game theory models}
\label{sec:comparison-standart-gt}
In this section we explain the  differences between our information
leakage games and standard approaches to game theory.
We discuss:
(1) why the use of vulnerability as a utility function  makes 
our games non-standard w.r.t. von Neumann-Morgenstern's treatment of 
utility,
(2) why the use of concave utility functions to model risk-averse players
does not capture the behavior of the attacker in our games, and
(3) how our games differ from traditional convex-concave games.

\subsection{The von Neumann-Morgenstern's treatment of utility}

In their treatment of utility, 
von Neumann and Morgenstern~\cite{VonNeumann:47:Book}
demonstrated that the utility of a mixed strategy equals the expected utility 
of the corresponding pure strategies when a set of axioms is satisfied for 
player's preferences over probability distributions (a.k.a. \emph{lotteries}) on payoffs.
Since in our leakage games the utility of a mixed strategy is \emph{not} the
expected utility of the corresponding pure strategies, it is relevant to identify how
exactly our framework fails to meet von Neumann and Morgenstern (vNM) axioms.

Let us first introduce some notation.
Given two mixed strategies $\sigma$, $\sigma'$ for a player, we write
$\sigma \preceq \sigma'$ (or $\sigma' \succeq \sigma$) when the player 
prefers $\sigma'$ over $\sigma$, and $\sigma \sim \sigma'$
when the player is indifferent between $\sigma$ and $\sigma'$.
Then, the vNM axioms can be formulated as follows~\cite{Rubinstein:12:Book}.
For every mixed strategies $\sigma$, $\sigma'$ and $\sigma''$:
\begin{description}
\item[A1] \emph{Completeness}: it is either the case that
$\sigma \preceq \sigma'$, $\sigma \succeq \sigma'$, or $\sigma \sim \sigma'$.

\item[A2] \emph{Transitivity}: if $\sigma \preceq \sigma'$ and $\sigma' \preceq \sigma''$, 
then $\sigma \preceq \sigma''$.

\item[A3] \emph{Continuity}: if $\sigma \preceq\sigma'\preceq\sigma''$, then 
there exist $p \in [0,1]$ s.t. $p\,\sigma{+}(1-p)\,\sigma'' \sim\sigma'$.

\item[A4] \emph{Independence}: if $\sigma \preceq \sigma'$ then for any $\sigma''$
and   $p \in [0,1]$ we have $p\,\sigma + (1-p)\,\sigma'' \preceq p\,\sigma' + (1-p)\,\sigma''$.
\end{description}


For any fixed prior $\pi $ 
on secrets,  the utility function $u(C)=\vf[\pi,C]$ is a total function on $\calc$ 
ranging over the reals, and therefore it satisfies axioms A1, A2 and 
A3 above.
However, $u(C)$ does not satisfy A4, as the next example illustrates.
\begin{example}
Consider the following three channel matrices from input set $\calx = \{0,1\}$
to output set $\caly=\{0,1\}$, where $\epsilon$ is a small positive constant:
\begin{align*}
\begin{array}{|c|c|c|}
\hline
C_{1} & y = 0 & y = 1 \\ \hline
x = 0 & 1-\epsilon & \epsilon \\ 
x = 1 & \epsilon & 1-\epsilon \\ \hline
\end{array}
\qquad 
\begin{array}{|c|c|c|}
\hline
C_{2} & y = 0 & y = 1 \\ \hline
x = 0 & 1 & 0 \\ 
x = 1 & 0 & 1 \\ \hline
\end{array}
\qquad
\begin{array}{|c|c|c|}
\hline
C_{3} & y = 0 & y = 1 \\ \hline
x = 0 & 0 & 1 \\ 
x = 1 & 1 & 0 \\ \hline
\end{array}
\end{align*}
If we focus on Bayes vulnerability, it is clear that an attacker
would prefer $C_{2}$ over $C_{1}$, i.e., $C_{1} \preceq C_{2}$.
However, for the probability $p = \nicefrac{1}{2}$ we would have:
\begin{align*}
\begin{array}{|c|c|c|}
\hline
p\,C_{1} + (1-p)\,C_{3} & y = 0 & y = 1 \\ \hline
x = 0 & \nicefrac{(1-\epsilon)}{2} & \nicefrac{(1+\epsilon)}{2} \\
x = 1 & \nicefrac{(1+\epsilon)}{2} & \nicefrac{(1-\epsilon)}{2} \\ \hline
\end{array} 
\quad \text{and} \quad
\begin{array}{|c|c|c|}
\hline
p\,C_{2} + (1-p)\,C_{3} & y = 0 & y = 1 \\ \hline
x = 0 & \nicefrac{1}{2} & \nicefrac{1}{2} \\
x = 1 & \nicefrac{1}{2} & \nicefrac{1}{2} \\ \hline
\end{array}
\end{align*}
Since channel $p\,C_{1} + (1-p)\,C_{3}$ clearly reveals no less information about 
the secret than channel $p\,C_{2} + (1-p)\,C_{3}$, we have that
$p\,C_{1} + (1-p)\,C_{3} \succeq p\,C_{2} + (1-p)\,C_{3}$, and the axiom of
independence is not satisfied.
\end{example}
It is actually quite natural that vulnerability does not satisfy 
independence: a convex combination of two \qm{leaky} channels 
(i.e., high-utility outcomes) can produce a \qm{non-leaky} channel 
(i.e., a low-utility outcome). 
As a consequence, the traditional 
game-theoretic approach to the utility of mixed strategies does not 
apply to our information leakage games. However the existence of Nash equilibria is still granted, as we will see in Section~\ref{sec:convex-games}, Corollary~\ref{cor:Nash}.


\subsection{Risk functions}

At a first glance, it may seem that our information leakage games 
could be expressed with some clever use of the concept of
\emph{risk-averse players} (in our case, the attacker), 
which is also based on convex utility functions (cf.~\cite{Osborne:94:BOOK}).
There is, however, a crucial difference: in the models of risk-averse
players, the utility function is convex 
\emph{on the payoff of an outcome of the game}, 
but the utility of a mixed strategy is still  
\emph{the expectation of the utilities of the pure strategies}, i.e., it is linear on the distributions.
On the other hand, the utility of mixed strategies in our information 
leakage games is \emph{convex on the distribution of the defender}.
This difference arises precisely because in our games utility
is defined as the vulnerability of the channel perceived by the
attacker, and, as we discussed, this creates an extra layer of
uncertainty for the attacker.

\subsection{Convex-concave games}

Another well-known model from standard game-theory is that of convex-concave
games, in which each of two players can choose among a continuous set of 
actions yielding convex utility for one player, and concave for the other.
In this kind of game the Nash equilibria are given by \emph{pure strategies} for 
each player. 

A natural question would be why not represent our systems
as convex-concave games in which the pure actions of players are
the mixed strategies of our leakage games.
Namely, the real values $p$ and $q$ that uniquely determine the defender's 
and the attacker's mixed strategies, respectively, in the 
two-millionaires game of Section~\ref{sec:motivating-example}, could be 
taken to be the choices of pure strategies in a convex-concave  game
in which the set of actions for each player is the real interval $[0,1]$. 

This mapping from our games to convex-concave games, however, would not
be natural.
One reason is that utility is still defined as expectation in the standard 
convex-concave games, in contrast to our games. 
Consider two strategies $p_{1}$ and $p_{2}$ with utilities $u_1$ and $u_2$, respectively.
If we mix them using the coefficient $q\in [0,1]$, 
the resulting strategy 
 $q\, p_{1} + (1-q)\, p_{2}$
 will have utility  $u=q\, u_{1} + (1-q)\, u_{2}$ in the standard convex-concave game, 
 while in our case the utility would in general be strictly smaller than $u$.
The second reason is that a pure action corresponding to a mixed strategy may not always be realizable.
To illustrate this point, consider again the two-millionaires game, and the defender's mixed strategy consisting in choosing 
Program $0$ with probability $p$ and Program $1$ with probability $1-p$.
The requirement that the defender has a pure action corresponding to  $p$ implies the existence of a program (on Jeeves' side) that makes internally a probabilistic choice with bias $p$ and, depending on the outcome, executes Program $0$ or  Program $1$. However, it is not granted that Jeeves disposes of such a program. Furthermore, Don would not know what choice has actually been made, and thus the program would not achieve the same functionality, i.e., let Don know who is the richest. (Note that Jeeves should not communicate to Don the result of the choice, because of the risk that Alice intercepts it.) 
This latter consideration underlines a key practical aspect of leakage games,
namely, the defender's advantage over the attacker due to his knowledge 
of the result of his own random choice (in a mixed strategy). This advantage would be lost in a convex-concave representation of the game since the random choice would be ``frozen'' in its representation as a pure action.

\section{Convexity of vulnerability w.r.t. channel composition}
\label{sec:convex-games}
In this section we show that posterior vulnerability is a convex function of the strategy of the defender. 
In other words, given a set of channels, and a probability distribution over them, the 
vulnerability of the  composition of these channels according to the distribution is smaller than
or equal to the  composition of their vulnerabilities. 
As a consequence, we derive the existence of the Nash equilibria. 

In order to state this result formally, we introduce the following notation: given a channel matrix $C$ and a scalar $a$, $a\,C$ is the matrix obtained by multiplying every element of $C$ by $a$.
Given two \emph{compatible} channel matrices $C_1$ and $C_2$, namely matrices with the same indices of rows and columns\footnote{Note that two channel matrices with different column indices can always be made compatible by adding  appropriate columns with $0$-valued cells in each of them.}, $C_1+C_2$ 
is obtained by adding  the cells of $C_1$ and $C_2$ with same indices. 
Note that if $\mu$ is a probability distribution on $\cali$, then 
 $ \bigadd_{i\in\cali} \mu(i)\,C_{i}$ is a channel matrix. 

\begin{theorem}[Convexity of vulnerability w.r.t. channel composition]
\label{theo:convex-V-q}
Let $\{C_{i}\}_{i \in \cali}$ be a family of compatible channels,
and $\mu$ be a distribution on $\cali$.
Then, for every prior distribution $\pi$, and every
vulnerability $\vf$, the corresponding posterior vulnerability is convex w.r.t. to channel composition. Namely, for any probability distribution $\mu$ on $\cali$, we have
$\postvf{\pi}{\bigadd_{i} \mu(i)\,C_{i}} \leq \sum_{i} \mu(i) \,\postvf{\pi}{C_{i}}$.
\end{theorem}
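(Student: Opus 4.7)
The plan is to push convexity of $\vf$ down to the level of individual posterior distributions and then re-assemble the inequality by summing over the output alphabet. The key algebraic observation is that, for a fixed observable $y$, the posterior induced by the composed channel $C = \bigadd_i \mu(i)\,C_i$ is a convex combination of the posteriors induced by the individual $C_i$, with coefficients that are themselves determined by the marginals on $y$.

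More concretely, I would first fix notation: write $p_i(y) = \sum_{x \in \calx} \pi(x)\,C_i(x,y)$ and $p_i(\cdot\mid y) \in \dist{\calx}$ for the posterior on $\calx$ given $y$ under channel $C_i$ (and similarly $p(y)$, $p(\cdot\mid y)$ for the composed channel $C$). A direct computation gives $p(y) = \sum_i \mu(i)\,p_i(y)$, and, by expanding the definition of the posterior,
\[
p(x\mid y) \;=\; \frac{\pi(x)\,C(x,y)}{p(y)} \;=\; \sum_{i \in \cali} \frac{\mu(i)\,p_i(y)}{p(y)}\,p_i(x\mid y),
\]
so $p(\cdot\mid y)$ is a convex combination of the $p_i(\cdot\mid y)$ with weights $\lambda_i^y = \mu(i)\,p_i(y)/p(y)$, which are nonnegative and sum to $1$.

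Since $\vf$ is convex on $\dist{\calx}$ by assumption, applying Jensen's inequality to this convex combination yields
\[
\priorvf{p(\cdot\mid y)} \;\leq\; \sum_{i \in \cali} \frac{\mu(i)\,p_i(y)}{p(y)}\,\priorvf{p_i(\cdot\mid y)}.
\]
Multiplying through by $p(y)$ clears the denominator and cancels $p(y)$ on the right, producing an inequality between non-normalised quantities that sum nicely. Summing over $y \in \caly$ and swapping the order of summation on the right-hand side then gives exactly
\[
\postvf{\pi}{C} \;=\; \sum_{y} p(y)\,\priorvf{p(\cdot\mid y)} \;\leq\; \sum_{i} \mu(i) \sum_{y} p_i(y)\,\priorvf{p_i(\cdot\mid y)} \;=\; \sum_{i} \mu(i)\,\postvf{\pi}{C_i},
\]
which is the desired bound.

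There is essentially no hard step here; the whole argument is a one-line Jensen application, and the only thing that requires a small amount of care is the identification of the composed posterior as a convex combination of the individual posteriors with the right weights. The minor subtlety is handling output symbols $y$ for which $p(y) = 0$: such $y$ simply contribute $0$ to the outer sum and can be dropped from the argument (and for each $i$ with $\mu(i)\,p_i(y) > 0$ one automatically has $p(y) > 0$), so the division above is well defined whenever needed.
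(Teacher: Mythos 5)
Your proof is correct and follows essentially the same route as the paper's: decompose the posterior of the composed channel, apply convexity of $\vf$ pointwise in $y$, and sum over the output alphabet. If anything, your version is slightly more careful than the paper's, since you apply Jensen to a genuine convex combination of normalised posteriors with weights $\mu(i)\,p_i(y)/p(y)$, whereas the paper applies convexity with weights $\mu(i)$ to the unnormalised vectors $\pi(\cdot)\,C_i(\cdot,y)/p(y)$, which implicitly relies on a $1$-homogeneous extension of $\vf$ beyond the simplex.
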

\begin{proof}
Define $p(y) = \sum_{x} \pi(x) \sum_{i} \mu(i)\,C_{i}(x,y)$.
Then:
\[
\begin{array}{rcll}
\postvf{\pi}{\bigadd_{i} \mu(i)\,C_{i}}
&=&\, \sum_{y  } p(y) \, \priorvf{\frac{\pi(\cdot)\, \sum_{i}\mu(i)\,C_{i}(\cdot,y) }{p(y)}}  & \text{(by def. of posterior $\vf$)}\\[2mm]
&=&\, \sum_{y } p(y) \, \priorvf{\sum_{i}\mu(i)\,\frac{\pi(\cdot)\, C_{i}(\cdot,y) }{p(y)}}  & \text{}\\[2mm] 
&\leq&\, \sum_{y } p(y) \, \sum_{i} \mu(i)\, \priorvf{\frac{\pi(\cdot)\, C_{i}(\cdot,y) }{p(y)}}  \;\;\;& \text{(*)}\\[2mm] 
&=&\, \sum_{i} \mu(i)\, \sum_{y} p(y)\, \priorvf{\frac{\pi(\cdot)\, C_{i}(\cdot,y) }{p(y)}}  & \text{}\\[2mm]
&=&\, \sum_{i} \mu(i)\, \postvf{\pi}{C_{i}}  & \text{(by def. of posterior $\vf$)}
\end{array}
\]
where (*) follows from the convexity of $\vf$ w.r.t. the prior  (cf.  Section \ref{subsec:qif}). 
\qed
\end{proof}

The existence of  Nash equilibria immediately follows from the above theorem:
\begin{corollary}\label{cor:Nash}
For any (zero-sum) information-leakage game there exist a Nash equilibrium, which in general is given by a mixed strategy.
\end{corollary}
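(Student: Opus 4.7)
The plan is to apply von Neumann's minimax theorem (Theorem~\ref{theo:vonneumann}) to the utility function $\vf(\delta,\alpha)$ over the mixed-strategy spaces $\distr\cald$ and $\distr\cala$, and then to read off a mixed-strategy Nash equilibrium from the resulting saddle point via the saddle-point inequality recalled immediately after that theorem.

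First I would verify the hypotheses of Theorem~\ref{theo:vonneumann}. Since $\cald$ and $\cala$ are finite, the spaces $\distr\cald$ and $\distr\cala$ are probability simplexes over finite sets, and therefore embed as compact convex subsets of Euclidean space. Continuity of $\vf(\delta,\alpha) = \sum_a \alpha(a)\,\postvf{\pi}{C_{\delta a}}$ in $(\delta,\alpha)$ follows from the continuity of $\vf$ (part of the definition of vulnerability in Section~\ref{subsec:qif}) together with the linear, hence continuous, dependence of $C_{\delta a} = \sum_d \delta(d)\, C_{da}$ on $\delta$.

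The crucial step is to check that $\vf(\delta,\alpha)$ is convex in $\delta$ and concave in $\alpha$. Concavity in $\alpha$ is immediate: by Definition~\ref{eq:v-mixed}, $\vf(\delta,\alpha)$ is a nonnegative linear combination in $\alpha$ of the fixed quantities $\postvf{\pi}{C_{\delta a}}$, so it is affine, hence concave, in $\alpha$. Convexity in $\delta$ is exactly where Theorem~\ref{theo:convex-V-q} is used: for each fixed $a$, applying that theorem with index set $\cali = \cald$, distribution $\mu = \delta$, and channels $C_d = C_{da}$ shows that $\delta \mapsto \postvf{\pi}{C_{\delta a}}$ is convex in $\delta$, and a nonnegative combination (with weights $\alpha(a)$) of convex functions is again convex.

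With all hypotheses in place, Theorem~\ref{theo:vonneumann} yields a saddle point $(\delta^*,\alpha^*)$ satisfying $\vf(\delta^*,\alpha) \leq \vf(\delta^*,\alpha^*) \leq \vf(\delta,\alpha^*)$ for all $\delta \in \distr\cald$ and $\alpha \in \distr\cala$, which is precisely the Nash equilibrium condition for the zero-sum game (defender minimizing, attacker maximizing $\vf$). The qualifier \emph{``in general given by a mixed strategy''} is already witnessed by the two-millionaires example of Section~\ref{sec:motivating-example}, whose matching-pennies-like utility table admits no pure equilibrium but has the mixed equilibrium $(\nicefrac{1}{2},\nicefrac{1}{2})$. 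The only nontrivial ingredient is the convexity of the utility in $\delta$, which is exactly the content of Theorem~\ref{theo:convex-V-q}; everything else is routine verification and invocation of the minimax theorem.
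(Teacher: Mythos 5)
Your proof is correct and follows exactly the same route as the paper's: affineness (hence concavity) of $\vf(\delta,\alpha)$ in $\alpha$, convexity in $\delta$ via Theorem~\ref{theo:convex-V-q}, and then von Neumann's minimax theorem to obtain a saddle point, which is the Nash equilibrium. The additional verifications you carry out (compactness and convexity of the simplexes, continuity of the utility) and the reference to the two-millionaires example for the ``in general mixed'' qualifier are details the paper leaves implicit, but the argument is the same.
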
 
\begin{proof}
Given a mixed strategy $(\delta, \alpha)$, the utility $\vf(\delta, \alpha)$ given in Definition~\ref{eq:v-mixed}   is affine (hence concave) on $\alpha$.
Furthermore, by Theorem~\ref{theo:convex-V-q}, $\vf(\delta, \alpha)$ is convex on $\delta$. 
Hence we can apply the von Neumann's minimax
theorem (Section \ref{subsec:zero-sum-games}), which ensures the existence of a saddle point, i.e., a Nash equilibrium. 
\qed
\end{proof}




\section{Computing equilibria of information leakage games}
\label{sec:nash-general}

Our goal is to solve information leakage games, in which
the success of an attack $a$ and a defence $d$ is measured by a vulnerability
measure $\vf$. The attack/defence combination is a pure strategy profile $(d,a)$
in this game, and is associated with a channel $C_{da}$ modeling the behavior
of the system. The attacker clearly knows his own choice $a$, whereas the
defender's choice is assumed to be hidden. Hence
the utilty of a mixed strategy profile $(\delta,\alpha)$ will be given
by Definition~\ref{eq:v-mixed}, that is:
\[
	\vf(\delta,\alpha) = \smallsum{a}\alpha(a) \,\postvf{\pi}{\smallsum{d}\delta(d)\,C_{da}}
\]

Note that $\vf(\delta,\alpha)$ is convex on $\delta$ and affine on $\alpha$,
hence Theorem~\ref{theo:vonneumann} guarantees the existence of an equilibrium
(i.e. a saddle-point) $(\delta^*, \alpha^*)$ which is a solution of both the
minimax and the maximin problems. The goal in this section is to compute a) a
$\delta^*$ that is part of an equilibrium, which is important in order to
optimize the defence, and b) the utility $\vf(\delta^*,\alpha^*)$, which is
important to provide an upper bound on the effectiveness of an attack when
$\delta^*$ is applied.

This is a convex-concave optimization problem for which various methods have been
proposed in the literature. If $\vf$ is twice differentiable (and satisfies a
few extra conditions) then the Newton method can be applied \cite{Boyd:04:BOOK};
however, many such measures, most notably Bayes-vulnerability, our main
vulnerability measure of interest, are not differentiable. For non-differentiable
functions, \cite{Nedic:09:JOTA} proposes a subgradient method that
iterates on both $\delta,\alpha$ at each step. 
We have applied this method and it
does indeed converge to $\vf(\delta^*,\alpha^*)$, with one important caveat: the
solution $\delta$ that it produces is not necessarily an equilibrium (note that
$\vf(\delta,\alpha) = \vf(\delta^*,\alpha^*)$ does not guarantee that
$(\delta,\alpha)$ is a saddle point). Producing an optimal $\delta^*$ is of
vital importance in our case.

The method we propose is based on the idea of solving the minimax problem $\hat\delta = \argmin_\delta
\max_\alpha \vf(\delta,\alpha)$, since its solution is guaranteed to be part of
an equilibrium.\footnote{Note that this is true only for $\delta$, the
$\alpha$-solution of the minimax problem is not necessarily part of an
equilibrium; we need to solve the maximin problem for this.}
To solve this problem, we exploit the fact that 
$\vf(\delta,\alpha)$ is affine on $\alpha$ (not just concave). For a fixed $\delta$, maximizing
$\smallsum{a}\alpha(a) \,\postvf{\pi}{\smallsum{d}\delta(d)\,C_{da}}$ simply involves
picking the $a$ with the highest $\postvf{\pi}{\smallsum{d}\delta(d)\,C_{da}}$ and
assigning probability $1$ to it. Hence, our minimax problem is equivalent to
$\hat\delta = \argmin_\delta f(\delta)$ where
$f(\delta) = \max_a\postvf{\pi}{\smallsum{d}\delta(d)\,C_{da}}$; that
is, we have to minimize the max of finitely many convex functions, with $\delta$
being the only variables.

For this problem we can employ
the \emph{projected subgradient} method, given by:
\[
	\delta^{(k+1)} = P(\delta^{(k)} - \alpha_k g^{(k)})
\]
where $g^{(k)}$ is any subgradient of $f$
on $\delta^{(k)}$ \cite{Boyd:06:misc}.
Note that the subgradient of a finite max is simply a subgradient of
any branch that gives the max at that point.
$P(x)$ is the projection of $x$ on the domain of $f$; in our
case the domain is the probability simplex,
for which there exist efficient algorithms for computing the projection
\cite{Wang:2013:arXiv}. Finally $\alpha_k$ is a step-size, for which various
choices guarantee convergence \cite{Boyd:06:misc}. In our experiments we
found $\alpha_k = 0.1/\sqrt{k}$ to perform well.

As the starting point $\delta^{(1)}$ we take the uniform distribution; moreover
the solution can be approximated to within an arbitrary $\epsilon> 0$
by using the stopping criterion of \cite[Section~3.4]{Boyd:06:misc}.
Note that the obtained $\hat\delta$ approximates the equilibrium strategy
$\delta^*$, while $f(\hat\delta)$ approximates $\vf(\delta^*,\alpha^*)$. Hence
we achieve both desired goals, as formally stated in the following result.

\begin{proposition}
	If $\vf$ is Lipschitz then
	the subgradient method discussed in this section converges to a $\delta^*$
	that is part of an equilibrium of the game.
	Moreover, let $\hat\delta$ be the solution computed within a given
	$\epsilon >0$, and let $(\delta^*, \alpha^*)$ be an equilibrium.
	Then it holds that:
	\[
		\vf(\hat\delta, \alpha) - \epsilon
		\le
		\vf(\hat\delta,\alpha^*)
		\le
		\vf(\delta,\alpha^*) + \epsilon
		\qquad \forall \delta,\alpha
	\]
	which also implies that $f(\hat\delta) - \vf(\delta^*,\alpha^*) \le \epsilon$.
\end{proposition}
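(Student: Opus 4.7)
The plan is to (a) verify that $f(\delta) = \max_a \postvf{\pi}{\smallsum{d}\delta(d)\,C_{da}}$ is a convex Lipschitz function on the probability simplex over $\cald$, (b) invoke the standard convergence guarantees of the projected subgradient method on a compact convex domain to produce an $\epsilon$-approximate minimizer $\hat\delta$ of $f$, and (c) combine the near-optimality of $\hat\delta$ with the saddle point characterization of equilibria established in Corollary~\ref{cor:Nash} to derive the two-sided inequality.

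For (a), the map $\delta\mapsto\smallsum{d}\delta(d)\,C_{da}$ is affine, and by Theorem~\ref{theo:convex-V-q} the posterior vulnerability $C\mapsto\postvf{\pi}{C}$ is convex; composing with affine maps preserves convexity, and taking a pointwise max over finitely many $a\in\cala$ yields convexity of $f$. For Lipschitz-ness, each entry of $\smallsum{d}\delta(d)\,C_{da}$ is affine in $\delta$, so the channel-valued map is Lipschitz in $\delta$ (say in the $\ell_1$ norm); composing with the Lipschitz function $\vf$ and taking a max over finitely many branches preserves the Lipschitz property with a uniform constant.

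For (b), the probability simplex is compact and convex, and projections onto it are efficiently computable, so with the diminishing, square-summable yet non-summable step-size $\alpha_k = 0.1/\sqrt{k}$ the textbook analysis of the projected subgradient method for Lipschitz convex functions guarantees $\min_{k\le K} f(\delta^{(k)}) \to \min_\delta f(\delta)$, with an explicit stopping criterion from \cite[Section~3.4]{Boyd:06:misc} producing an iterate $\hat\delta$ with $f(\hat\delta)-\min_\delta f \le \epsilon$. Since $(\delta^*,\alpha^*)$ is a saddle point and $\vf(\delta,\cdot)$ is affine in $\alpha$, we have $f(\delta^*) = \max_\alpha \vf(\delta^*,\alpha) = \vf(\delta^*,\alpha^*) = \min_\delta f(\delta)$, so $f(\hat\delta) \le \vf(\delta^*,\alpha^*) + \epsilon$. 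For (c), I would then sandwich: the right inequality follows from $\vf(\hat\delta,\alpha^*) \le \max_\alpha \vf(\hat\delta,\alpha) = f(\hat\delta) \le \vf(\delta^*,\alpha^*) + \epsilon \le \vf(\delta,\alpha^*) + \epsilon$, using the saddle property $\vf(\delta^*,\alpha^*) \le \vf(\delta,\alpha^*)$; the left inequality follows analogously from $\vf(\hat\delta,\alpha) \le f(\hat\delta) \le \vf(\delta^*,\alpha^*) + \epsilon \le \vf(\hat\delta,\alpha^*) + \epsilon$, using $\vf(\delta^*,\alpha^*) \le \vf(\hat\delta,\alpha^*)$.

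The main obstacle will be the bookkeeping around what \emph{convergence to an equilibrium} means when $\arg\min f$ is not a singleton: the cleanest way is to argue that, by compactness of the simplex and continuity of $f$, every accumulation point of the subgradient iterates is a minimizer of $f$, and any minimizer of $f$ is part of some saddle point (paired with any best response on the attacker's side). Once this is in place, everything else reduces to standard convex-analysis facts about projected subgradient descent on compact convex sets plus the saddle point inequalities already available from the minimax theorem.
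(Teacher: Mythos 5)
Your proof follows essentially the same route as the paper's (which is only a two-line sketch): reduce to minimizing $f(\delta)=\max_a \postvf{\pi}{\smallsum{d}\delta(d)\,C_{da}}$, observe that any minimizer of $f$ is the defender's component of a saddle point, and invoke standard projected-subgradient convergence under the Lipschitz hypothesis; your sandwich derivation of the two-sided inequality via $\vf(\hat\delta,\alpha)\le f(\hat\delta)\le \vf(\delta^*,\alpha^*)+\epsilon$ and the saddle-point inequalities is correct and supplies details the paper omits entirely. One small slip: $\alpha_k = 0.1/\sqrt{k}$ is diminishing and non-summable but \emph{not} square-summable (since $\sum_k 1/k$ diverges); the convergence guarantee you need is the one for diminishing, non-summable step sizes, so nothing in the argument breaks.
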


\begin{proof}(\emph{Sketch}) $\argmin_\delta f(\delta)$ is equivalent to the minimax problem whose
$\delta$-solution is guaranteed to be part of an equilibrium. Convergence is
ensured by the subgradient method under the Lipschitz condition, and given that
$||\delta^{(1)}-\delta^*||$ is bounded by the distance
between the uniform and a point distribution.
\qed
\end{proof}

Finally, of particular interest is the Bayes-vulnerability measure
\cite{Chatzikokolakis:08:JCS,Smith:09:FOSSACS}, given by 
$\postvf{\pi}{C} = \smallsum{y}\max_x \pi(x) \,C(x,y)$, since it is widely used to
provide an upper bound to all other measures of information
leakage~\cite{Alvim:12:CSF}.
For this measure, $\vf$ is Lipschitz and the subgradient vector $g^{(k)}$ is given by
$
g^{(k)}_d =   \sum_y  \pi(x^*_y)\, C_{da^*}(x^*_y,y)
$
where $a^*,x^*_y$ are the ones giving the max in the branches of
$f(\delta^{(k)})$.
Note also that, since $f$ is piecewise linear, the convex optimization problem
can be transformed into a linear one using a standard technique, and then solved
by linear programming. 
However, due to the large number of max branches
of $\vf$, this conversion can be a problem with a huge number
of constraints.
In our experiments we found that the subgradient method
described above is significantly more efficient than linear programming.

Note also that, although the subgradient method is general, it might
be impractical in applications where the number of attacker or defender actions
is very large. Application-specific methods could offer better scalability in
such cases, we leave the development of such methods as future work.


\section{Case study}
\label{sec:case-study}
In this section, we apply our game-theoretic analysis to the case of anonymous
communication on a mobile ad-hoc network (MANET). 
In such a network, nodes can move in space and communicate with other nearby nodes. 
We assume that nodes can also access some global 
(wide area) network, but such
connections are neither anonymous nor trusted. Consider, for instance, smartphone
users who can access the cellular network, but do not trust the network
provider. The goal is to send a message on the global network without revealing
the sender's identity to the provider. For that, users can form
a MANET using some short-range communication method (e.g., bluetooth), and take
advantage of the local network to achieve anonymity on the global one.

Crowds~\cite{Reiter:98:TISS} is a protocol for anonymous communication that can
be employed on a MANET for this purpose.
Note that, although more advanced systems for anonymous communication
exist (e.g. Onion Routing), the simplicity of Crowds makes it particularly
appeling for MANETs.
The protocol works as follows:
the \emph{initiator} (i.e., the node who wants to send the message) selects
some other node connected to him (with uniform probability) and forwards the
request to him. A \emph{forwarder}, 
upon receiving the message, performs a
probabilistic choice: with probability $p_f$ he keeps forwarding the message
(again, by selecting uniformly a user among the ones connected to him), while
with probability $1-p_f$ he delivers 
the message on 
the global network. 
Replies,
if any, can be routed back to the initiator following the same path in reverse
order.

Anonymity comes from the fact that the \emph{detected} node (the last in the
path) is most likely not the initiator. Even if the attacker knows the network
topology, he can infer that the initiator is most likely a node close to the
detected one, but if there are enough nodes we can achieve some reasonable
anonymity guarantees. However, the attacker can gain an important advantage by
deploying a node himself and participating to the MANET. When a node forwards a
message to this \emph{corrupted} node, this action is observed by the attacker
and increases the probability of that node being the initiator. Nevertheless,
the node can still claim that he was only forwarding the request for someone
else, hence we still provide some level of anonymity. By modeling the system as
\begin{wrapfigure}[15]{r}{0.38\linewidth}
	\vspace{-6mm}
	\centering
	\includegraphics[width=0.38\columnwidth]{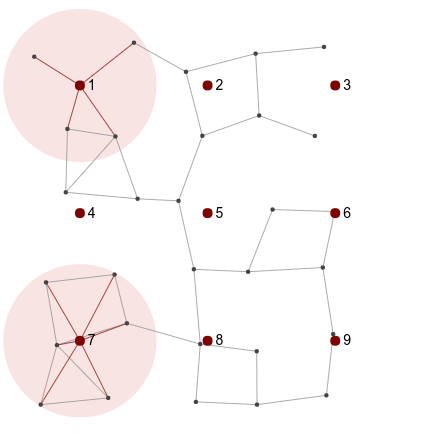}
	\vspace{-7mm}
	\caption{A MANET with 30 users in a $1$km$\times 1$km area.}
	\label{fig:manet}
\end{wrapfigure}
a channel, and computing its posterior Bayes vulnerability
\cite{Smith:09:FOSSACS}, we get the probability that the attacker guesses
correctly the identity of the initiator, after performing his observation.

In this section we study a scenario of 30 nodes deployed in an area of
$1$km$\times 1$km, in the locations illustrated in Fig.~\ref{fig:manet}. Each
node can communicate with others up to a distance of 250 meters, forming the
network topology shown in the graph. To compromise the anonymity of the system,
the attacker plans to deploy a corrupted node in the network; the question is
which is the \emph{optimal location} for such a node. The answer is far from trivial:
on the one side being connected to many nodes is beneficial,  but at the same
time these nodes need to be ``vulnerable'', being close to a highly connected
clique might not be optimal.
At the same time, the administrator of the network is suspecting that the
attacker is about to deploy a corrupted node. Since this action cannot be
avoided (the network is ad-hoc), a countermeasure is to deploy a
\emph{deliverer} node at a location that is most vulnerable.
Such a node directly delivers all messages forwarded to it on the global
network; since it
never generates messages its own anonymity is not an issue, it only
improves the anonymity of the other nodes. Moreover, since it never communicates
in the local network its operation is invisible to the attacker.
But again, 
the optimal location for the new deliverer node is not obvious, and most importantly,
the choice depends on the choice of the attacker.

To answer these questions, we model the system as a game where the actions of 
attacker and defender are the locations of newly deployed corrupted
and honest nodes, respectively. We assume that the 
possible locations for new nodes are the nine ones shown in
Fig.~\ref{fig:manet}. For each pure strategy profile $(d,a)$, we construct the
corresponding network and use the PRISM model checker to construct the
corresponding channel $C_{da}$, using a model similar to the one
of~\cite{Shmatikov:02:CSFW}.
Note that the computation considers the specific network topology
of Fig.~\ref{fig:manet}, which reflects the positions of each node at the time
when the attack takes place; the corresponding channels need to be recomputed if
the network changes in the future.
As  leakage measure we use the
posterior Bayes vulnerability (with uniform prior $\pi$), which is the
attacker's probability of correctly guessing the initiator given his
observation in the protocol. According to Definition~\ref{eq:v-mixed}, for a mixed
\begin{wraptable}[13]{r}{0.56\linewidth}
	\vspace{-8mm}
	\centering
	\begin{scriptsize}
	\[\def\arraystretch{1.2}
	 \begin{array}{c|c|ccccccccc|}
	  \multicolumn{2}{c}{}  & \multicolumn{9}{c}{\text{\small Attacker's action}} \\ \cline{2-11}
	  &   & \textbf{1} & \textbf{2} & \textbf{3} & \textbf{4} & \textbf{5} & \textbf{6} & \textbf{7} & \textbf{8} & \textbf{9}  \\ \cline{2-11}
	  \multirow{9}{*}{\rotatebox[origin=c]{90}{\text{\small Defender's action}}}
	  & \textbf{1} & 7.38 & 6.88 & 6.45 & 6.23 & 7.92 & 6.45 & 9.32 & 7.11 & 6.45 \\
	  & \textbf{2} & 9.47 & 6.12 & 6.39 & 6.29 & 7.93 & 6.45 & 9.32 & 7.11 & 6.45 \\
	  & \textbf{3} & 9.50 & 6.84 & 5.46 & 6.29 & 7.94 & 6.45 & 9.32 & 7.11 & 6.45 \\
	  & \textbf{4} & 9.44 & 6.92 & 6.45 & 5.60 & 7.73 & 6.45 & 9.03 & 7.11 & 6.45 \\
	  & \textbf{5} & 9.48 & 6.91 & 6.45 & 6.09 & 6.90 & 6.13 & 9.32 & 6.92 & 6.44 \\
	  & \textbf{6} & 9.50 & 6.92 & 6.45 & 6.29 & 7.61 & 5.67 & 9.32 & 7.11 & 6.24 \\
	  & \textbf{7} & 9.50 & 6.92 & 6.45 & 5.97 & 7.94 & 6.45 & 7.84 & 7.10 & 6.45 \\
	  & \textbf{8} & 9.50 & 6.92 & 6.45 & 6.29 & 7.75 & 6.45 & 9.32 & 6.24 & 6.45 \\
	  & \textbf{9} & 9.50 & 6.92 & 6.45 & 6.29 & 7.92 & 6.24 & 9.32 & 7.11 & 5.68 \\\cline{2-11}
	\end{array}
	\vspace{-4ex}
	\]
	\end{scriptsize}
	\vspace{-2mm}
	\caption{Utility for each pure strategy profile.}
	\label{table:utilities}
	\vspace{-10mm}
\end{wraptable}
strategy profile $(\delta,\alpha)$ the utility is 
$\vf(\delta,\alpha)= \expectDouble{a\leftarrow\alpha}{}\hspace{-0.5ex}\postvf{\pi}{C_{\delta a}}$.

The utilities  (posterior Bayes vulnerability \%) for each pure profile are displayed in Table~\ref{table:utilities}.
Note that the attacker and defender actions substantialy affect the
effectiveness of the attack, with the probability of a correct guess ranging
between $5.46\%$ and $9.5\%$.
Based on the results of 
Section~\ref{sec:nash-general},
we can then compute the best strategy for the defender, which turns out to be (probabilities expressed as \%):
\[
	\delta^* = (34.59,  3.48,  3.00,  10.52,  3.32,  2.99,  35.93,  3.19,  2.99)
\]
This strategy is part of an equilibrium 
and guarantees
that for any choice of the attacker the vulnerability is at most $8.76\%$, and
is substantially better that the best pure strategy (location~1) which leads to
a worst vulnerability of $9.32\%$. As expected, $\delta^*$ selects the most
vulnerable locations (1 and 7) with the highest probability. 
Still, the
other locations are  selected with non-negligible probability, which is
important for maximizing the attacker's uncertainty about the defense.



\section{Related work}
\label{sec:related-work}
There is an extensive literature on game theory models for security 
and privacy in computer systems,
including 
network security,
vehicular networks, 
cryptography,
anonymity, 
location privacy, and
intrusion detection. See \cite{Manshaei:13:ACMCS} for a survey. 

In many studies, security games have been used to model and analyze utilities 
between interacting agents, especially an attacker and a defender.
In particular, Korzhyk et al. \cite{Korzhyk:11:JAIR} present 
a theoretical analysis of security games and
investigate the relation between Stackelberg and 
simultaneous games under various forms of uncertainty.
In application to network security,
Venkitasubramaniam~\cite{Venkitasubramaniam:12:ACMTN} investigates 
anonymous wireless networking, which they formalize as a  
zero-sum game between the network designer and the attacker.
The task of the attacker is to choose a subset of nodes to monitor 
so that anonymity of routes is minimum whereas the task of the 
designer is to maximize anonymity by choosing nodes to 
evade flow detection by generating independent transmission schedules.

Khouzani et al.~\cite{Khouzani:15:CSF} present a  
framework for analyzing a trade-off between usability and 
security.
They analyze guessing attacks and derive the optimal policies 
for secret picking as Nash/Stackelberg equilibria.
Khouzani and 
Malacaria \cite{Khouzani:16:CSF} investigate properties of 
leakage when perfect secrecy is not achievable due to the 
limit on the allowable size of the conflating sets, and
show the existence of universally optimal strategies for
a wide class of entropy measures, and for $g$-entropies 
(the dual of $g$-vulnerabilities).
In particular, they show that designing a channel with minimum
leakage is equivalent to Nash equilibria in a corresponding 
two-player zero-sum games of incomplete information for a range of 
entropy measures.

Concerning costs of security,
Yang et al.~\cite{Yang:12:POST} propose a 
framework to analyze user behavior in anonymity networks. 
Utility is modeled as a combination of weighted cost and anonymity utility. 
They also consider incentives and their impact on users' cooperation.


Some security games have considered leakage of information about the 
defender's choices.
%
For example, Alon et al.~\cite{Alon:13:SIAMDM} present two-player zero-sum games where a defender chooses probabilities of secrets while an attacker chooses and learns some of the defender's secrets.
Then they show how the leakage on the defender's secrets influences the defender's optimal strategy.
%
%
Xu et al.~\cite{Xu:15:IJCAI} present zero-sum security games where the attacker 
acquires partial knowledge on the security resources the defender is protecting, and show 
the defender's optimal strategy under such attacker's knowledge.
%
%
More recently, Farhang et al.~\cite{Farhang:16:GameSec} present two-player games 
where utilities are defined taking account of information leakage, although the defender's 
goal is different from our setting.
They consider a model where the attacker incrementally and stealthily obtains 
partial information on a secret, while the defender periodically changes the 
secret after some time to prevent a complete compromise of the system.
In particular,  the defender is not attempting to minimize the leak of a certain secret, but only to make it useless (for the attacker).  
Hence their model of defender and utility  is totally different from ours.
%
%
%
To the best of our knowledge there have been no works exploring games with utilities defined as information-leakage measures.

Finally, in game theory Matsui ~\cite{Matsui:89:GEB} uses the term ``information leakage game'' 
with a meaning different than ours,
namely, as a game in which (part of) the strategy of one player may be leaked in advance to the other player, and the latter may revise his strategy based on this
knowledge. 

\section{Conclusion and future work}
\label{sec:conclusion}
In this paper we introduced the notion of information leakage games,
in which a defender and an attacker have opposing goals in optimizing
the amount of information leakage in a system.
In contrast to standard game theory models, in our games the utility of 
a mixed strategy is a convex function of the 
distribution of the defender's actions, rather than the expected value of  the  utilities of the pure strategies in the support.  
Nevertheless, the important properties of game theory, 
notably the existence of a Nash equilibrium, still hold for our zero-sum leakage games, 
and we provided algorithms to compute the corresponding optimal strategies 
for the attacker and the defender. 

As future research, we would like to extend leakage games to scenarios
with repeated observations, i.e., when the attacker can repeatedly observe 
the outcomes of the system in successive runs, under the assumption that 
both the attacker and the defender may change the channel at each run. 
Furthermore, we would like to consider the possibility 
to adapt the defender's strategy to the secret value, as
we believe that in some cases this would provide 
significant advantage to the defender. 
We would also like to consider the cost of attack 
and of defense, which would lead to non-zero-sum games.

\paragraph*{Acknowledgments}
\begin{small}
The authors are thankful to Arman Khouzani and Pedro O. S. Vaz de Melo 
for valuable discussions.
This work was supported by JSPS and Inria under the project LOGIS of the Japan-France AYAME Program,
and by the project Epistemic Interactive Concurrency (EPIC) from the STIC 
AmSud Program.
M\'{a}rio S. Alvim was supported by CNPq, CAPES, and FAPEMIG.
Yusuke Kawamoto was supported by JSPS KAKENHI Grant Number JP17K12667.
\end{small}

\bibliographystyle{abbrv}
\bibliography{short,new}

\appendix




\end{document}